\documentclass[aps,pre,twocolumn,superscriptaddress,showkeys,showpacs]{revtex4-1}
\usepackage{graphicx}
\usepackage{dcolumn}
\usepackage{bm}
\usepackage{amsmath}
\usepackage[dvipsnames]{xcolor}
\usepackage{amssymb}
\usepackage{amsthm}
\usepackage{hyperref}
\usepackage[utf8]{inputenc}
\usepackage[english]{babel}
\newtheorem{theorem}{Theorem}

\newtheorem{lemma}{Lemma}
\newtheorem{corollary}{Corollary}
\newcommand{\WH}{\widetilde{H}}

\hypersetup{colorlinks=true, citecolor=blue, urlcolor=blue, linkcolor=blue}
\bibliographystyle{apsrev4-1.bst}

\begin{document}
\title{Success of digital adiabatic simulation with large Trotter step}
\author{Changhao Yi}
\email{yichanghao123@unm.edu}
\affiliation{Center for Quantum Information and Control, Department of Physics and Astronomy, University of New Mexico, Albuquerque, New Mexico 87131, USA}
\date{\today}

\begin{abstract}
The simulation of adiabatic evolution has deep connections with adiabatic quantum computation, the quantum approximate optimization algorithm and adiabatic state preparation. Here we address the error analysis problem in quantum simulation of adiabatic process using Trotter formulas. We show that with additional conditions, the circuit depth can be linear in simulation time $T$. The improvement comes from the observation that the fidelity error here can't be estimated by the norm distance between evolution operators. This phenomenon is termed the  robustness of discretization in digital adiabatic simulation. It can be explained in three steps, from analytical and numerical evidence: (1) The fidelity error should be estimated by applying adiabatic theorem on the effective Hamiltonian instead. (2) Because of the specialty of Riemann-Lebesgue lemma, most adiabatic process is naturally robust against discretization. (3) As the Trotter step gets larger, the spectral gap of effective Hamiltonian tends to close, which results in the failure of digital adiabatic simulation. 
\end{abstract}
\maketitle

\section{Introduction}

As one of the most promising applications of quantum computer, the simulation of many body systems \cite{feynman2018simulating} has attracted much attention in this community. Many novel and efficient methods \cite{low2019hamiltonian,childs2012hamiltonian,campbell2019random,faehrmann2021randomizing,su2021fault} have been studied and implemented over the last few decades. Among them, the Trotterization method \cite{lloyd1996universal} is a simple and practical one. The basic idea is to separate unitary operator $U$ into small steps, then arrange local gates to approximate each short time evolution. The generated quantum circuits will be a digital evolution operator $U_{\text{digital}}$. A proper estimation of $\|U - U_{\text{digital}}\|$ is necessary to upper bound the gate complexity, for a general initial state, where $\|\cdot\|$ represents the largest singular value of a matrix.

Although the study of the norm distance error $\|U - U_{\text{digital}}\|$ is very mature \cite{childs2019nearly,childs2021theory,tran2020destructive}, recently, many examples that exhibits the success of Trotterization with large Trotter step have been discovered \cite{yi2021spectral,heyl2019quantum,sieberer2019digital,richter2021simulating}, while a convincing analytic explanation is still absent. Suppose in a quantum simulation, the digital evolution operator $U_{\text{digital}}$ already deviates significantly from the ideal evolution operator $\|U - U_{\text{digital}}\| \approx 1$, while the quantity of interest is still accurate $\mathcal{Q}[U] \approx \mathcal{Q}[U_{\text{digital}}]$, then we say this simulation is ``robust" against rough Trotterization. (As a clarification, many previous works focus on the robustness of quantum algorithms against physical decoherence \cite{childs2001robustness,sarandy2005adiabatic,aaberg2005quantum}, while the ``robust" in our paper is referred to an intrinsic property of the simulation algorithm.) The quantity of interest can be fidelity between states \cite{yi2021spectral}, expectation value of observables \cite{heyl2019quantum,sieberer2019digital} and hydrodynamic scaling of correlations \cite{richter2021simulating}. In this work, we address the robustness of digital adiabatic simulation (DAS). More specifically, it's the robustness of scaling index $-\partial\log\epsilon/\partial\log T \approx 1$, where $\epsilon$ is the fidelity difference between the state evolved under adiabatic evolution and the corresponding state of the final Hamiltonian; $T$ is the total simulation time, which also quantifies how slow the evolution is. Quantum adiabatic theorem \cite{jansen2007bounds,amin2009consistency,marzlin2004inconsistency} provides a proper estimation of $\epsilon$, and the inverse dependence of $T$ is one important feature of it.

 Inspired by quantum adiabatic theorem, adiabatic quantum computation (AQC) \cite{farhi2001quantum,albash2018adiabatic,aharonov2008adiabatic} and the quantum approximate optimization algorithm (QAOA) \cite{farhi2014quantum,zhou2020quantum} are two heuristic quantum optimization algorithms that have been widely studied in the last few decades. Both algorithms aim to find the ground state of a complicated system by simulating adiabatic evolution on quantum circuits. In Trotter formula simulation of time-dependent Hamiltonians \cite{poulin2011quantum}, we not only approximate large unitaries operators with local gates, but also replace the continuous time-ordered evolution operator with the time-averaged version. Its error analysis is different from the time-independent case and has some special properties that help to reduce the complexity \cite{an2021time,low2018hamiltonian,kalev2021quantum}. So far, different schemes have been proposed \cite{barends2016digitized,boixo2009eigenpath,wan2020fast,ge2016rapid}, while less is known about the apparent robustness  \cite{yi2021spectral}. 
 
 The effective Hamiltonian is a promising tool to explain it. The idea is simple, each Trotterized evolution operator is an exact evolution of effective Hamiltonian. As to DAS, by regarding the digital adiabatic evolution operator as an adiabatic process under effective Hamiltonian, we can apply adiabatic theorem on the effective adiabatic path to obtain an upper bound for digital error. To complete the argument, we further prove that a large class of adiabatic evolution operators is robust against discretization (not Trotterization). The proof is based on a discrete form of the Riemann-Lebesgue lemma, which in its continuous formulation is well-known as the principle behind adiabatic theorem. We find that most functions that meet the description of the lemma is robust. However, as the Trotter step gets larger, eventually the robustness fails. Numerical evidence based on gapped systems indicates that this occurs precisely when the effective Hamiltonian becomes gapless along the adiabatic path.

The paper is organized as follows. In Sec. \ref{sec:pre} we begin with preliminaries about simulating adiabatic process on quantum circuits, together with previous works about an effective Hamiltonian method. In Sec. \ref{sec:III} we calculate the linear expansion of adiabatic error, and then relate the first-order term to Riemann-Lebesgue lemma and analyze the robustness of it in Sec. \ref{sec:rl}. Another important question is why and when this robustness fails, and we attribute it to the shrinking of the effective Hamiltonian spectral gap based on numerical evidence in Sec. \ref{sec:V}. Finally, in Sec. \ref{sec:conclude} we conclude with discussions about this work and future directions. 

\section{Previous works}
\label{sec:pre}
In this section, we elaborate on the preliminaries about digital adiabatic simulation, then we demonstrate how the argument of effective Hamiltonian can help to provide a better upper bound of fidelity error, and how it efficiently describes the interplay between the adiabatic process and Trotterization procedure.

Given a slow-evolving time-independent Hamiltonian: $\hat{H}(t) = (1-t/T)H_{i} + (t/T) H_{f}$~\footnote{This linear interpolation is not the only possible choice, but we focus on this model throughout the explicit calculations in this work.} with the initial state $|\psi_{i}\rangle$ set as one of the ground state of $H_{i}$, the evolution operator of the entire process $t\in [0,T]$ is
\begin{equation}
A := \exp_{\mathcal{T}}\left(-i\int_{0}^{T}\hat{H}(t)dt\right).
\end{equation}
Here we set $\hbar = 1$. If the spectral gap of $\hat{H}(t)$ doesn't close during evolution $t\in [0,T]$ and the evolution is slow enough, then the adiabatic theorem ensures that
\begin{equation*}
\lim_{T\to\infty}A|\psi_{i}\rangle = |\psi_{f}\rangle,
\end{equation*}
where $|\psi_{f}\rangle$ is the ground state of $H_{f}$. We introduce a dimensionless variable $s := t/T$ to simplify the expression
\begin{equation}
H(s) := \hat{H}(sT),\quad A = \exp_{\mathcal{T}}\left(-iT\int_{0}^{1}H(s)ds\right).
\end{equation}
Then the fidelity error between $A|\psi_{i}\rangle$ and $|\psi_{f}\rangle$ can be quantified by \cite{jansen2007bounds}
\begin{equation}
\begin{aligned}
    &\sqrt{1 - |\langle\psi_{f}|A|\psi_{i}\rangle|^{2}} \le \frac{\|H'(0)\|}{T\lambda^{2}(0)} + \frac{\|H'(1)\|}{T\lambda^{2}(1)} \\
    &+ \frac{1}{T}\left(\int_{0}^{1}\frac{7\|H'(s)\|^{2}}{\lambda^{3}(s)} + \frac{\|H''(s)\|}{\lambda^{2}(s)}ds\right),
\label{equ:adb_theorem}
\end{aligned}
\end{equation}
where $\lambda(s)$ is the spectral gap of $H(s)$. We denote this complicated upper bound (the RHS) as  $\mathcal{G}(T,H)$. The inverse dependence of $T$ and $\lambda(s)$ is the most important feature of the adiabatic theorem. 

In AQC, the initial Hamiltonian $H_{i}$ has a ground state $|\psi_{i}\rangle$ which is easy to prepare, while $H_{f}$ has the ground state $|\psi_{f}\rangle$ encodes the answer we want, or the state we wish to prepare. If we can prepare $|\psi_{i}\rangle$ and construct $A$ faithfully, it will be equivalent to preparing the ground state of $H_{f}$. Thus, the central task is to perform the above process on quantum circuits. A common choice for quantum simulation is the first-order Trotter formula $U(\delta t) \approx  \prod_k e^{-i H_k \delta t}$. Using this method, the ideal adiabatic operator $A$ is simulated by $A_{\text{tro}}$ defined as
\begin{gather*}
    A_{\text{tro}} := \prod_{j=1}^{L}(U_{\text{tro}})_{j} := \prod_{j=1}^{L} \prod_{k=1}^{\Lambda}\exp[-iH_{k}(s_{j})\delta t],
\end{gather*}
where $s_{j} := (j+1)/L$, $L$ represents the circuit depth, $\delta t := T/L$ is the Trotter step, and the index $k = 1,..,\Lambda$ ranges over the layers in the Hamiltonian (e.g. $H_1 = H_i , H_2 = H_f$). The Trotter error is estimated by \cite{poulin2011quantum,barends2016digitized}
\begin{equation}
    \|A - A_{\text{tro}}\| = O(\max_{s,k}\|H_{k}(s)\|^{2}T^{2}/L).
\label{equ:trotter}
\end{equation}
\begin{figure}
    \centering
    \includegraphics[width = 9cm]{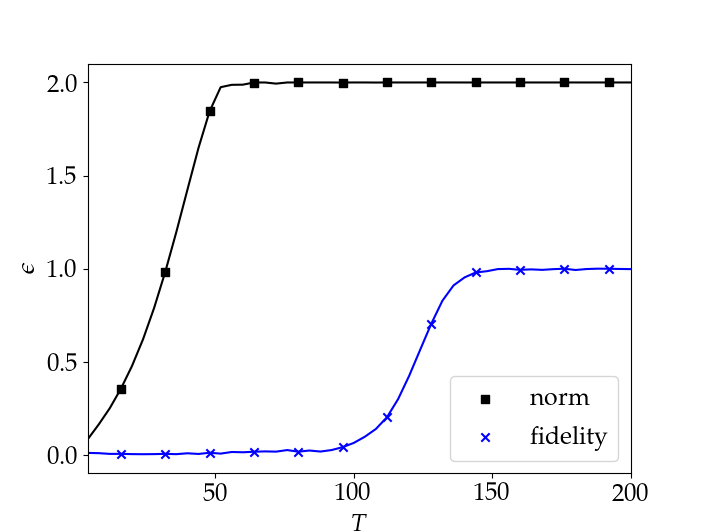}
    \caption{Comparison between norm distance error $\|A_{d} - A_{\text{tro}}\|$ and fidelity error $\sqrt{1 - |\langle\psi_{i}|A^{\dagger}_{d}A_{\text{tro}}|\psi_{i}\rangle|^{2}}$. The Hamiltonian we use throughout this paper is transverse field Ising model $H(s) = (1-s)H_{X} + sH_{Z}$ on $N=8$ sites. $H_{X} = -\sum_{j=1}^{N}X_{j}$, $H_{Z} = -\sum_{j=1}^{N}(Z_{j} + 0.5Z_{j}Z_{j+1})$ without periodic boundary conditions. In the Trotter formula, $H(s)$ is decomposed into $(1-s)H_{X}$ and $sH_{Z}$. $L$ is fixed to $100$ while $T$ ranges between $[4,200]$. The figure demonstrates that, even when the norm distance error is huge, the fidelity error still remains small. This is an example of robustness of quantum simulation.}
    \label{fig:comparison}
\end{figure}
To determine the proper choice of $T,L$, the error we are interested is
\begin{equation}
    \epsilon_{\text{tot}} := \sqrt{1 - |\langle\psi_{f}|A_{\text{tro}}|\psi_{i}\rangle|^{2}}.
\end{equation}
The total error $\epsilon_{\text{tot}}$ is the actual numerical error on the quantum circuits. It is closely related to
\begin{gather}
    \epsilon_{\text{tro}} := \sqrt{1 - |\langle\psi_{i}|A^{\dagger}_{\text{tro}}A|\psi_{i}\rangle|^{2}},\\
    \epsilon_{\text{adb}} := \sqrt{1 - |\langle\psi_{f}|A|\psi_{i}\rangle|^{2}}.
\end{gather}
These three errors satisfies triangular inequality: $\epsilon_{\text{tot}} \le \epsilon_{\text{adb}} + \epsilon_{\text{tro}}$. $\epsilon_{\text{adb}}$ is estimated by adiabatic theorem, $\epsilon_{\text{tro}}$ is upper bounded by the norm distance error in Eq. (\ref{equ:trotter}). Therefore, it's natural to conclude that
\begin{equation*}
    \epsilon_{\text{tot}} = O\left(\mathcal{G}(T,H)\right) + O\left(\max_{s,k}\|H_{k}(s)\|^{2}\frac{T^{2}}{L}\right).
\end{equation*}
From the above analysis, we can surmise that in DAS there exists the trade-off between two errors: the first one comes from the adiabatic process itself, the second originates in the Trotterization procedure. For fixed $L$, when $T$ is small, $\epsilon_{\text{adb}}$ dominates thus $\epsilon_{\text{tot}}$ is also inversely dependent of $T$; when $T$ is large, $\epsilon_{\text{tro}}$ dominates and $\epsilon_{\text{tot}}$ also starts to increase with $T$.

However, the norm distance $\|A - A_{\text{tro}}\|$ overestimates the true digital error, for fidelity error $\epsilon_{\text{tro}}$ and norm distance error can have different error scaling behaviors, and in AQC, only fidelity error matters (whereas the norm error is sensitive to a global phase). In our numerical tests (see Figure. \ref{fig:comparison}), we approximate the ideal adiabatic operator $A$ with the discretized but not Trotterized version $A_{d}$,
\begin{gather}
    A_{d} := \prod_{j=1}^{L}U_{j},\quad U_{j} := \exp[-iH(s_{j})\delta t],
\label{equ:dis_A}
\end{gather}
and then compare norm distance error $\|A_{d} - A_{\text{tro}}\|$ with fidelity error $\epsilon_{\text{tro}}$. It's clear that as $\delta t$ gets larger, the norm distance error quickly increases, while the fidelity error remains small. This difference originates from the specialty of initial state. For example, in the quantum simulation of time-dependent Hamiltonian, when the initial state is an eigenstate, the fidelity error has an upper bound irrelevant to the simulation time \cite{yi2021spectral}, while the norm distance error always increases with $t$.

To exploit the fact that the initial state in DAS is an eigenstate of $H_{i}$, a new insight from~\cite{yi2021spectral} is to consider the Trotterized evolution operators $(U_{\text{tro}})_{j}$ as an exact evolution operator of an effective Hamiltonian $\WH(s,\delta t)$, then regard $A_{\text{tro}}$ as an exact adiabatic process:
\begin{gather*}
    \WH(s,\delta t) := i\log(U_{\text{tro}}(s,\delta t))/\delta t,\\
    A_{\text{tro}} = \prod_{j=1}^{L}\exp[-i\WH(s_{j},\delta t)\delta t],\\
    A_{\text{tro}}|\psi_{i}\rangle \approx \exp_{\mathcal{T}}\left[-iT\int_{0}^{1}\WH(s,\delta t)ds\right]]|\psi_{i}\rangle.
\end{gather*}
This adiabatic evolution should be able to transform the ground state of $\WH(0,\delta t)$ to that of $\WH(1,\delta t)$. Under this framework, if $\WH(s,\delta t)$ further satisfies
\begin{equation*}
    \WH(0,\delta t) = H_{i} \quad , \quad \WH(1,\delta t) = H_{f}.
\end{equation*}
Then we can apply adiabatic theorem to effective Hamiltonian as an estimation of the digital error :
\begin{equation}
    \epsilon_{\text{tot}} = O\left(\mathcal{G}(T,\WH(s,\delta t))\right).
\label{equ:adb_eff}
\end{equation}
In general cases, even if the above boundary condition is not satisfied, we can still quantify the distance from the ground states of $\WH(0,\delta t)$ and $\WH(1,\delta t)$ to $|\psi_{i}\rangle$ and $|\psi_{f}\rangle$. Then the total error can also be bounded.

This interpretation of digital error is guided by numerical tests. Figure \ref{fig:das_err} illustrates the error scaling of $\epsilon_{\text{tot}}$ with respect to $T$ for fixed $L$. When $\delta t$ is small, $\WH(s,\delta t)$ is very similarly to the original Hamiltonian $H = \WH(s,0)$, therefore $\epsilon_{\text{tro}} \approx \epsilon_{\text{adb}}$ and the overall scaling of $O(T^{-1})$ is the same with the non-Trotterized version; when $\delta t$ gets larger, the discrete adiabatic process itself gets more ``coarse", and the perturbation to the original adiabatic path $\WH(s,\delta t) - H(s)$ gets larger. We believe in our simulation the perturbation causes the closure of spectral gap, thus the adiabatic evolution won't give us the right final state, the error increases accordingly. In the following sections, we intend to rigorize the above observations with analytic and numerical results.

\begin{figure}
    \includegraphics[width = 8cm]{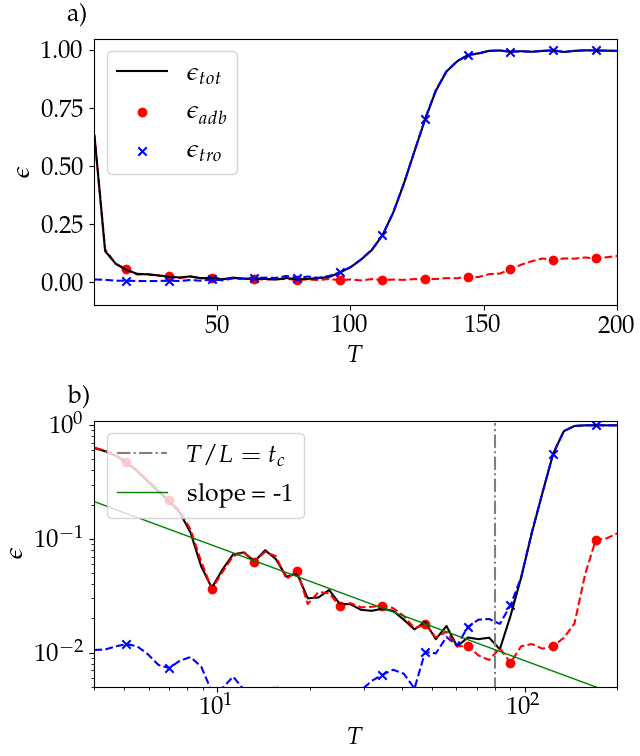}
    \caption{(a) The error scaling of three types of errors : $\epsilon_{\text{adb}}$, $\epsilon_{\text{tro}}$ and $\epsilon_{\text{tot}}$. The parameters are the same as those in Figure \ref{fig:comparison}. We use $A_d$ to obtain $\epsilon_{\text{adb}}$, thus the simulation fails for large $T/L$ as well. As indicated by the figures, the three types of errors roughly satisfy a linear relation : $\epsilon_{\text{tot}} \approx \epsilon_{\text{adb}} + \epsilon_{\text{tro}}$, which indicates that the total error is determined by the interplay of the adiabatic process and the Trotter splitting. (b) The log-log version of the first figure. The scaling index $-\partial\log\epsilon_{\text{tot}}/\partial\log T$ of DAS is robust until $\delta t = t_{c} \approx 0.8$. In region $T\in [4,Lt_{c}]$, both errors $\epsilon_{\text{tro}}, \epsilon_{\text{adb}}$ follows the prediction of adiabatic theorem $\epsilon = O(T^{-1})$, while after the critical point, $\epsilon_{\text{tro}}$ increases quickly together with $\epsilon_{\text{tro}}$ for $T\in[Lt_{c},200]$. We say $\epsilon_{\text{tot}}$ is dominated by $\epsilon_{\text{adb}}$ in the first region, and dominated by $\epsilon_{\text{tro}}$ in the second one. }
    \label{fig:das_err}
\end{figure}

\section{Expansion of adiabatic error}
\label{sec:III}

The previous analysis~\cite{yi2021spectral} assumed the discretization error was negligible in order to focus on the effect of Trotterization.   Here we establish the region for the estimation of $\epsilon_{\text{tot}}$ in Eq. (\ref{equ:adb_eff}) to be valid. Based on the effective Hamiltonian method, the analysis of $\epsilon_{\text{tro}}$ essentially boils down to the adiabatic error of $A_{d}$ for general adiabatic path $H(s)$. In this section, we begin with the study of the linear expansion of adiabatic error $\epsilon_{\text{adb}}$ for the discrete adiabatic operator $A_{d}$. Similar questions have been studied in \cite{cheung2011improved,ambainis2004elementary,amin2009consistency}, while here we use a different approach to derive it. We also require $H(s)$ to be non-degenerate : $\forall l\neq m, E_{l}(s)\neq E_{m}(s)$.

Follow the definition in Eq. (\ref{equ:dis_A}), each evolution operator $U_{j}$ can be diagonalized as
\begin{equation*}
    U_{j} = B_{j}\hat{\Lambda}_{j}B^{\dagger}_{j},
\end{equation*}
where $B_{j}$ encodes the eigenbasis of $H_{j}$, $(\hat{\Lambda}_{j})_{ll} = \exp[-i E_{l}(j)\delta t]$. Here $l$ labels the energy level, and $j$ labels the index of time step. In adiabatic analysis, a common trick is to replace $H(s)$ with $H(s) - E_{0}(s)$, as the effect of this transformation is merely an extra phase in $A_{d}$. After this simplification procedure, the new diagonal matrix $\Lambda_{j}$ can be written as
\begin{gather*}
    (\Lambda_{j})_{ll} = e^{-i\lambda_{l}(j)\delta t},\quad \lambda_{l}(j) := E_{l}(j) - E_{0}(j).
\end{gather*}
In the product of evolution operators $U_{j}$, we merge the neighboring eigenbasis matrices into a transition matrix:
\begin{equation*}
    S_{j} := B^{\dagger}_{j+1}B_{j},\quad (S_{j})_{lm} = \langle l(j+1)|m(j)\rangle,
\end{equation*}
where $|m(j)\rangle$ denotes the $m$-th eigenstate of Hamiltonian $H(s_{j})$. With the new notations, we obtain
\begin{equation*}
    A_{d} = B_{L}\Gamma B^{\dagger}_{1},\quad \Gamma := \Lambda_{L}\prod_{j=1}^{L-1}S_{j}\Lambda_{j}
\end{equation*}
since
\begin{equation*}
    B^{\dagger}_{1}|\psi_{i}\rangle = B^{\dagger}_{L}|\psi_{f}\rangle = (1,0,\cdots,0)^{T}.
\end{equation*}
The adiabatic error is directly related to $\Gamma$:
\begin{equation*}
    \epsilon_{\text{adb}} = \sqrt{1 - |\langle\psi_{f}|A_{d}|\psi_{i}\rangle|^{2}} = \sqrt{1 - \Gamma_{00}},
\end{equation*}
and the $\Gamma$ matrix can be illustrated as
\begin{equation}
    \Gamma = \begin{pmatrix}
\sqrt{1 - \epsilon^{2}_{\text{adb}}} & \cdot &\cdot &\cdot\\
\bar{\epsilon}_{1} & \cdot &\cdot &\cdot\\
\bar{\epsilon}_{2} &\cdot &\cdot &\cdot\\
\cdot &\cdot &\cdot &\cdot
\end{pmatrix}.
\end{equation}
$\bar{\epsilon}_{l}$ is the undesired transition amplitude to the $l$-th energy level of the final Hamiltonian. The linear expansion of $\{\bar{\epsilon}_{l}\}$ can be derived from the linear expansion of $\Gamma$. Notice that $\delta t$ is not regarded as a tiny quantity here, while $\{S_{j}\}$ are all close to identity. Therefore, $\Gamma$ can be expanded in terms of the off-diagonal parts of $\{S_{j}\}$:
\begin{equation*}
\begin{aligned}
\Gamma &= \Gamma^{(0)} + \Gamma^{(1)} + \cdots\\
&= \prod_{j=1}^{L}\Lambda_{j} + \sum_{k=1}\left(\prod_{j>k}\Lambda_{j}\right)(S_{k} - I)\left(\prod_{j<k}\Lambda_{j}\right) + \cdots.
\end{aligned}
\end{equation*}
$\Gamma^{(q)}$ is equivalent to the summation of q-th jump paths in \cite{cheung2011improved}. The leading term $\Gamma^{(0)}$ doesn't contribute to $\{\bar{\epsilon}_{l}\}$. Thus, we approximate $\bar{\epsilon}_{l}$ with the off-diagonal elements of $\Gamma^{(1)}$, and obtain
\begin{equation}
\epsilon_{l} := \frac{1}{L}\sum_{k=1}^{L-1}\theta_{l}(k)\exp\left[-i\frac{T}{L}\sum_{j<k}\lambda_{l}(j)\right]
\end{equation}
with
\begin{equation}
\theta_{l}(j) := \langle l^{\prime}(j)|0(j)\rangle = \frac{\langle l(j)|H^{\prime}(s_{j})|0(j)\rangle}{\lambda_{l}(j)}.
\end{equation}
The adiabatic error is the summation of undesired transition probabilities:
\begin{equation}
    \epsilon^{2}_{\text{adb}} \approx \sum_{l>0}|\epsilon_{l}|^{2}.
\end{equation}
The expression of $\epsilon_{l}$ is close to the function described by Riemann-Lebesgue lemma \cite{amin2009consistency}. To see this, consider the continuous limit $L\to\infty$:
\begin{equation}
\begin{aligned}
    \lim_{L\to\infty}\epsilon_{l} &= \int_{0}^{1}\theta_{l}(s)\exp\left[-iT\int_{0}^{s}\lambda_{l}(s^{\prime})ds^{\prime}\right]ds\\
    &= \int_{\Omega_{l}(0)}^{\Omega_{l}(1)}\frac{\theta_{l}(\Omega_{l}^{-1}(y))}{\lambda_{l}(\Omega_{l}^{-1}(y))}e^{-iTy}dy,
\end{aligned}
\label{equ:continu}
\end{equation}
where $\Omega_{l}(s) := \int_{0}^{s}\lambda_{l}(s')ds'$.

As a brief analysis, $\theta_{l}(s)$ itself is proportional to $\|H'(s)\|/\lambda_{l}$, and another factor of $1/\lambda_{l}$ comes from changes of variable in Eq. (\ref{equ:continu}). Therefore, the first term in the expansion of the adiabatic error $\epsilon_{\text{adb}}$ has order $O(\|H'\|/T\lambda_{1}^{2})$, which matches the prediction of the adiabatic theorem. 

To simplify the question, we focus on the case where $\sum_{l>0}|\epsilon_{l}|^{2}$ is the only term in $\epsilon^{2}_{\text{adb}}$ with order $O(T^{-2})$. This estimation is not complete in general, as an instance, the famous Marzlin-Sanders counterexample \cite{marzlin2004inconsistency} can't be explained in this way. The issue is not about $\{\epsilon_{l}\}$, it's because sometimes $\Gamma^{(1)}$ is not enough to approximate $\Gamma$. Fortunately, the higher order expansions of $\Gamma$ and the continuous limit have been systematically studied. Researchers proved that (see Corollary 1 in \cite{cheung2011improved}), if $\|H''(s)\| = o(\sqrt{T}), \|H'''(s)\| = o(T)$ and $\|H'(s)\|, (\min_{l,m}|E_{l} - E_{m}|)^{-1}$ have constant upper bounds, then $\sum_{l>0}\epsilon_{l}|l(1)\rangle$ is the leading term of $(1-|\psi_{f}\rangle\langle\psi_{f}|)A|\psi_{i}\rangle$ with proper choice of phase.  Therefore, under this premise, we can focus on the behavior of $\{\epsilon_{l}\}$ under discretization.

\section{Robustness of discrete Riemann-Lebesgue lemma}
\label{sec:rl}
In Sec. \ref{sec:III} we proved that, the leading term of the undesired transition rate to higher energy levels of $H_{f}$ is a discrete sum version of integrals in the form
\begin{equation}
    I = \int_{0}^{1}f(s)\exp[-iT g(s)]ds,\quad g'(s) > 0.
\label{equ:rl}
\end{equation}
Using the Riemann-Lebesgue lemma, we can immediately see the amplitude of the above integral has order $O(T^{-1})$. In our situation, the question of interest is how robust the scaling is with respect to discretization. Usually, an integral is calculated numerically in this method:
\begin{equation*}
    \int_{a}^{b}c(s)ds \to \sum_{k=1}^{L}\frac{1}{L}c(s_{k}),\quad s_{k} = a + (b-a)\frac{k}{L}.
\end{equation*}
The error has order $O(|c'(s)|/L)$. In Eq. (\ref{equ:rl}), the derivative of integrand has order $O(T)$, while we don't necessarily need $L$ to be much larger than $T$ to obtain the overall scaling of $O(T^{-1})$. A simple example is the case where $f(s) = 1,  g(s) = s$, both the discrete version and the exact value of the integral can be solved analytically:
\begin{gather*}
    I' = \int_{0}^{1}e^{-iTs}ds = \frac{1}{iT}[1 - e^{-iT}],\\
    I'_{d} = \frac{1}{L}\sum_{k=1}^{L}e^{-ikT/L} = e^{-iT/L}\frac{1 - e^{-iT}}{L(1 - e^{-iT/L})}.
\end{gather*}
Comparing their amplitudes, we find that as long as $0 < T/L < 3.78$, the difference between $|I'|$ and $|I'_{d}|$ is merely a factor of 2. While in error analysis, we care only about the parameter dependence of $T$, this factor of 2 can totally be ignored. On the other hand, when $T/L = 2\pi$,  $|I'_{d}| = 1$ is much larger than $|I'| = O(T^{-1})$, thus the simulation fails. Essentially this is the reason why the discretization error is negligible in DAS even for large $\delta t$, and the simulation breaks down as $\delta t$ continues to grow. 

The above reasoning can be generalized to the following theorem, which is one of our main results (see Appendix \ref{apd:rl}):

\begin{theorem}[Discrete Riemann-Lebesgue lemma]
$f(s)$ is a $L^{1}$ integrable, second-differentiable complex function defined on $[0,1]$, $\lambda(s)$ is a first-differentiable positive real function. Denote $f(s_{k}),s_{k}\in[0,1]$ as $f_{k}$, then consider the following discrete summation:
\begin{equation*}
J := \frac{1}{L}\sum_{k=1}^{L}f_{k}\exp\left[-i\frac{T}{L}\sum_{j<k}\lambda_{j}\right].
\end{equation*}
Defining $\delta t : = T/L$. If $\max_{s} \lambda(s) \delta t < 3.78$, we have
\begin{equation*}
\begin{aligned}
|J| = O\left(\frac{1}{T}\max_{s=0,1}\frac{|f(s)|}{\lambda(s)}+\frac{1}{T^{2}}\max_{s=0,1}\frac{|\eta(s)|}{\lambda(s)}\right) + O\left(\frac{\mathcal{A}(\eta,\lambda)}{T^{2}}\right),
\end{aligned}
\end{equation*}
where
\begin{gather*}
    \eta(s) := L\left(\frac{f(s)}{\omega(s)} - \frac{f(s-1/L)}{\omega(s-1/L)}\right) \approx \left(\frac{f(s)}{\omega(s)}\right)',\\
    \quad \omega(s) := \frac{1}{i\delta t}(e^{-i\delta t\lambda(s)}-1),
\end{gather*}
and
\begin{equation*}
    \mathcal{A}(\eta,\lambda) := \int_{0}^{1}\left|\left(\frac{\eta(s)}{\omega(s)}\right)'\right|ds.
\end{equation*}
\label{theore:discrete_rl}
\end{theorem}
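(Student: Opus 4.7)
The plan is to prove the theorem by a discrete analog of the integration-by-parts argument that underlies the continuous Riemann-Lebesgue lemma, namely Abel summation. The key observation enabling this is that the phase factor $\phi_{k} := \exp[-i(T/L)\sum_{j<k}\lambda_{j}]$ obeys a discrete linear recursion $\phi_{k+1} - \phi_{k} = i\delta t\,\omega_{k}\phi_{k}$, which is precisely the discrete analog of $\phi'(s) = -iT\lambda(s)\phi(s)$. Rearranged as $\omega_{k}\phi_{k} = (\phi_{k+1}-\phi_{k})/(i\delta t)$, this identity lets me trade a sum weighted by $\phi_{k}$ for a telescoping expression plus boundary terms, exactly as integration by parts does in the continuous setting.

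First I would rewrite $\sum_{k}f_{k}\phi_{k} = (i\delta t)^{-1}\sum_{k}(f_{k}/\omega_{k})(\phi_{k+1}-\phi_{k})$ and apply Abel summation. This produces boundary terms of the form $(f_{L}/\omega_{L})\phi_{L+1} - (f_{1}/\omega_{1})\phi_{1}$ together with a residual sum proportional to $\sum_{k}(a_{k}-a_{k-1})\phi_{k} = (1/L)\sum_{k}\eta_{k}\phi_{k}$, where $a_{k}:=f_{k}/\omega_{k}$. Dividing by $L$ and using $L\delta t = T$, the boundary contribution is exactly the first $\mathcal{O}(|f|/(T\lambda))$ term in the theorem. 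Next I would apply the same manipulation to the residual sum $\sum_{k}\eta_{k}\phi_{k}$: writing it as $(i\delta t)^{-1}\sum_{k}(\eta_{k}/\omega_{k})(\phi_{k+1}-\phi_{k})$ and performing Abel summation a second time yields new boundary terms of size $|\eta|/(T^{2}\lambda)$ plus a second-order residual $T^{-2}\sum_{k}(b_{k}-b_{k-1})\phi_{k}$ with $b_{k}:=\eta_{k}/\omega_{k}$. Since $|\phi_{k}|=1$, this residual is controlled by $\sum_{k}|b_{k}-b_{k-1}|$, which is a Riemann sum for the total variation of $\eta/\omega$ on $[0,1]$, yielding the final $\mathcal{O}(\mathcal{A}(\eta,\lambda)/T^{2})$ term.

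To convert these intermediate estimates into the stated bounds I need to control $1/|\omega(s)|$. Since $|\omega(s)| = (2/\delta t)|\sin(\lambda(s)\delta t/2)|$, as long as $\lambda(s)\delta t/2$ stays away from the zeros of $\sin$ the denominator is safely nonvanishing. The explicit constant $3.78$ in the hypothesis is $2\times 1.895\ldots$, where $1.895\ldots$ is the first positive solution of $|\sin x|/x = 1/2$; this is consistent with the elementary example $f=1$, $g(s)=s$ analyzed just above the theorem and guarantees $|f/\omega|\le 2|f|/\lambda$ and $|\eta/\omega|\le 2|\eta|/\lambda$ uniformly in $s$, which is precisely what is needed to turn the $1/\omega$ factors in the boundary terms into the $1/\lambda$ factors appearing in the theorem.

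The main obstacle I anticipate is the bookkeeping in the two successive Abel summations, in particular tracking which boundary index appears where after each step, and justifying that the discrete sum $\sum_{k}|b_{k}-b_{k-1}|$ is genuinely upper bounded by the continuous total variation $\mathcal{A}(\eta,\lambda) = \int_{0}^{1}|(\eta/\omega)'|\,ds$ rather than only its discrete analog up to $\mathcal{O}(1/L)$ corrections, which would otherwise contribute spurious lower-order terms. Once the recursion $\phi_{k+1}-\phi_{k} = i\delta t\,\omega_{k}\phi_{k}$ is in hand and the nonvanishing of $\omega$ is secured by the $3.78$ threshold, the remainder of the argument should reduce to careful but routine algebra.
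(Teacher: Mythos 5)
Your proposal is correct and follows essentially the same route as the paper: the identity $\phi_{k+1}-\phi_{k}=i\delta t\,\omega_{k}\phi_{k}$, two rounds of Abel summation, the bound $|\omega(s)|=(2/\delta t)\sin(\lambda(s)\delta t/2)\ge\lambda(s)/2$ for $\lambda(s)\delta t<3.78$ (you correctly identify the origin of the constant), and the control of $\sum_{k}|b_{k}-b_{k-1}|$ by the continuous total variation, which the paper secures exactly (not merely up to $\mathcal{O}(1/L)$) via the elementary estimate $|z_{k}-z_{k-1}|\le\int_{s_{k-1}}^{s_{k}}|z'(s)|\,ds$.
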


Shortly speaking, Theorem \ref{theore:discrete_rl} implies that as long as the Hamiltonian is stable in the sense that $T$ is much larger than $|\lambda'_{l}|, |\theta_{l}'|$ and $|\theta_{l}''|$, and $\max\lambda_{l}(s)\delta t < 3.78$, the leading term of adiabatic error is very close to the continuous limit. To make sure every $\epsilon_{l}$ is robust, it's sufficient to have $ \max_{s,l}|E_{l}(s) - E_{0}(s)|\delta t < 3.78$. This conclusion is a little counterintuitive: although the value of $\epsilon_{\text{adb}}$ is determined by the smallest spectral gap, its robustness is controlled by the largest spectral gap. Also, in practice, the amplitude of $\epsilon_{m}$ with high energy level $\lambda_{m}$ can be too small to influence $\epsilon_{\text{adb}}$, which makes $A_{d}$ more robust than predicted (see F\ref{fig:rl}).

\begin{figure}
    \centering
    \includegraphics[width = 8cm]{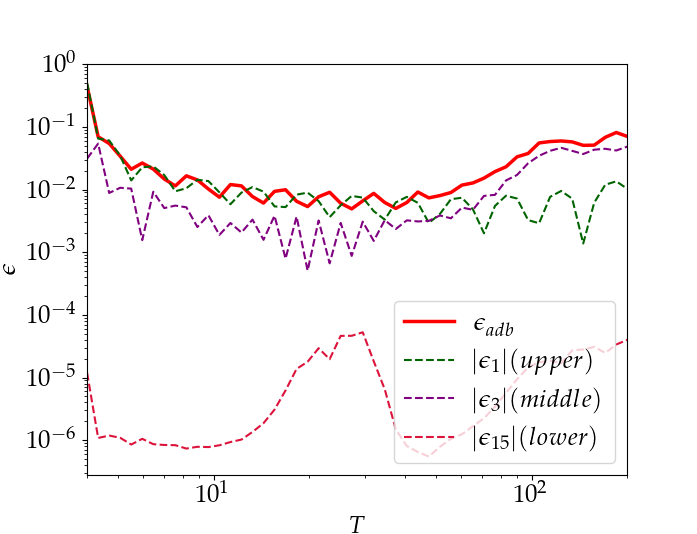}
    \caption{Error scaling of $|\epsilon_{1}|$ (the upper dashed line), $|\epsilon_{3}|$ (the middle dashed line) and $|\epsilon_{15}|$ (the lower dashed line) in log-log figure. The Hamiltonian here is the same as that of Fig. \ref{fig:comparison} with $N = 4$. As predicted by Theorem \ref{theore:discrete_rl}, the scaling $\epsilon_{l} = O(T^{-1})$ breaks down when $\max_{s}\lambda_{l}\delta t \approx 2\pi$. Thus, although $\epsilon_{\text{adb}}$ (the solid line) is dominated by $|\epsilon_{1}|$ at the beginning, its robustness breaks down with $|\epsilon_{3}|$ for $\lambda_{3} > \lambda_{1}$; $|\epsilon_{15}|$ has larger energy gap, but its overall amplitude is too small to influence $\epsilon_{\text{adb}}$.}
    \label{fig:rl}
\end{figure}

Follow Theorem \ref{theore:discrete_rl}, we obtain the discrete analog of Corollary 1 in \cite{cheung2011improved} (see Appendix \ref{apd:proof}):
\begin{corollary}[Robustness of discrete adiabatic process]
Given the discrete adiabatic operator defined in Eq. (\ref{equ:dis_A}), if
\begin{equation}
        \|H'\| = O(1),\quad\|H''\| = o(\sqrt{T}),\quad \|H'''\| = o(T),
\label{equ:condition1}
\end{equation}
\begin{equation}
    (\min_{l,m,s}|E_{l}(s) - E_{m}(s)|)^{-1} = O(1),
\label{equ:condition2}
\end{equation}
and $\max_{l,s}\lambda_{l}(s)\delta t < 3.78$, then
\begin{equation*}
    \epsilon_{\text{adb}} = O\left(\max_{s=0,1}\frac{\|H'(s)\|}{T\lambda^{2}_{1}(s)}\right).
\end{equation*}
\label{corollary:robust}
\end{corollary}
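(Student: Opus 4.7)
The plan is to apply Theorem \ref{theore:discrete_rl} levelwise to the expansion $\epsilon_{\text{adb}}^{2} \approx \sum_{l>0}|\epsilon_{l}|^{2}$ derived in Section \ref{sec:III} and then sum over $l$ with a completeness (Parseval-type) inequality. Conditions (\ref{equ:condition1}) and (\ref{equ:condition2}), together with the cited Corollary 1 of \cite{cheung2011improved}, already certify that $\sum_{l>0}|\epsilon_{l}|^{2}$ captures $\epsilon_{\text{adb}}^{2}$ up to $o(T^{-2})$, so the remaining work is entirely in bounding each individual $|\epsilon_{l}|$.

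For a fixed level $l$, I would identify $f(s) := \theta_{l}(s)$ and $\lambda(s) := \lambda_{l}(s)$ in Theorem \ref{theore:discrete_rl}. The hypothesis $\max_{l,s}\lambda_{l}(s)\delta t < 3.78$ applies uniformly in $l$, and the theorem then produces three contributions: a leading boundary term of size $|\theta_{l}(s)|/(T\lambda_{l}(s)) \le \|H'(s)\|/(T\lambda_{l}^{2}(s))$ at $s\in\{0,1\}$, a subleading boundary correction of size $|\eta(s)|/(T^{2}\lambda_{l}(s))$, and an integral correction $\mathcal{A}(\eta,\lambda)/T^{2}$. Using Hellmann--Feynman-type identities, $\theta_{l}'$ is controlled by $\|H''\|$ and $\|H'\|^{2}$ times a bounded number of inverse gaps (which are $\mathcal{O}(1)$ by (\ref{equ:condition2})), while $\omega'$ scales with $|\lambda_{l}'| = \mathcal{O}(\|H'\|)$. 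One more differentiation in $(\eta/\omega)'$ introduces $\theta_{l}''$ and $\lambda_{l}''$, controlled in the same way by $\|H'''\|$ and $\|H''\|$. The size assumptions $\|H''\| = o(\sqrt{T})$, $\|H'''\| = o(T)$ then force both $|\eta|$ and $\mathcal{A}(\eta,\lambda)$ to grow at most as $o(T)$, so after division by $T^{2}$ the two correction terms are $o(T^{-1})$ and get absorbed into the leading boundary term.

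Finally I would sum over $l$ by exploiting $\theta_{l}(s) = \langle l(s)|H'(s)|0(s)\rangle/\lambda_{l}(s)$ and $\lambda_{l}(s) \ge \lambda_{1}(s)$ for $l \ge 1$, which gives
\begin{equation*}
    \sum_{l>0}\frac{|\theta_{l}(s)|^{2}}{\lambda_{l}^{2}(s)} \le \frac{1}{\lambda_{1}^{4}(s)}\sum_{l>0}|\langle l(s)|H'(s)|0(s)\rangle|^{2} \le \frac{\|H'(s)\|^{2}}{\lambda_{1}^{4}(s)}.
\end{equation*}
Taking a square root at the two boundary values $s=0,1$ then yields the claimed $\mathcal{O}(\max_{s=0,1}\|H'(s)\|/(T\lambda_{1}^{2}(s)))$ scaling. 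I expect the main technical obstacle to be the second paragraph: systematically expanding $\eta(s)$ and $(\eta/\omega)'(s)$ and verifying that each successive differentiation only contributes a bounded power of $\lambda_{l}^{-1}$, so that condition (\ref{equ:condition2}) genuinely closes the estimate instead of producing ever-higher inverse-gap factors. Once this bookkeeping is settled, the corollary follows directly from Theorem \ref{theore:discrete_rl}.
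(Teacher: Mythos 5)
Your proposal is correct and follows essentially the same route as the paper: reduce to $\sum_{l>0}|\epsilon_l|^2$ via conditions (\ref{equ:condition1})--(\ref{equ:condition2}) and Corollary 1 of the cited work, apply Theorem \ref{theore:discrete_rl} levelwise with $f=\theta_l$, $\lambda=\lambda_l$ (using $|\lambda_l|\le|\omega_l|\le 2|\lambda_l|$ under the $3.78$ hypothesis), and close with the completeness bound $\sum_{l>0}|\langle 0|H'|l\rangle|^2\le\|H'\|^2$. The only cosmetic difference is that you propose to carry out the derivative bookkeeping for $\eta$ and $\mathcal{A}(\eta,\lambda)$ explicitly, where the paper instead invokes Lemma \ref{lemma:sum_complex} to inherit the corresponding continuous bounds from the cited reference with $\lambda_l$ replaced by $\omega_l$.
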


\section{A numerical test for the closure of energy gap}
\label{sec:V}

In previous sections, we have proved that when the Trotter step remains in certain region $\delta t = O(1/\max_{l,m,s}|E_{l}(s) - E_{m}(s)|)$, and $H(s)$ doesn't have a fast driven oscillation term, then the prediction of adiabatic theorem is accurate $\epsilon = O(T^{-1})$.  However, the sharp increase of total error (see Fig. \ref{fig:das_err}) at a threshold value $t_c$ requires extra explanation. There are several reasons that might account for the instability of DAS : the effective adiabatic path $\WH(s,\delta t)$ itself might differ substantially from $H(s)$, or the discretization procedure may no longer be robust. In this section we demonstrate that, in the model we use (see Fig. \ref{fig:comparison}), the threshold phenomenon coincides with the closure of the spectral gap in $\WH(s,\delta t), s\in [0,1]$.

The rigorous calculation of $\WH(s,\delta t)$ is easy when $\delta t$ is less than $O(\lambda/N)$ \cite{yi2021spectral}, while the same expression is hard to analyze both analytically and numerically for large $\delta t$. The issue is in determining the correspondence of the eigenstates of $U_\textrm{tro}(\delta t)$ and those of $H(s)$ when $\delta t$ is $\Omega(1)$.  To overcome this we propose a numerical test that detects the closing of the spectral gap around the eigenstate of interest without resorting to calculating the energy levels of $\WH(s,\delta t)$. The intuition originates from the quantum Zeno effect \cite{boixo2009eigenpath,chiang2014improved}. Given an adiabatic path $\{H(s_{j})\}$, at each step we project current ground state $|\psi_{j}\rangle$ to that of the next Hamiltonian $|\psi_{j+1}\rangle$, then eventually, we obtain a state very close to the ground state of $H(1)$:
\begin{gather*}
    H(s_{0}) \to H(s_{1}) \to \cdots \to H(s_{L-1}),\\
    |\psi_{0}\rangle \to |\psi_{1}\rangle \to \cdots \to |\psi_{L-1}\rangle.
\end{gather*}
In the numerical test, we choose the initial state $|\phi_{0}\rangle$ as the ground state of the first Hamiltonian $H(0)$. We increase the value of $s$ gradually from 0 to 1. At each step $s = s_{j}$, we calculate the eigenbasis of the next Hamiltonian $H(s_{j+1})$. Among these quantum states, we pick the one with largest overlap with $|\phi_{j}\rangle$ as our next ``ground state" $|\phi_{j+1}\rangle$. The process ends when $s=1$ and we term it a ``test of near degeneracy". If the interval $1/L$ is small enough : $1/2 > \|H'(s_{j})\|/L\lambda_{j}$, the generated states $\{|\phi_{j}\rangle\}$ should all be ground states of $H(s_{j})$. While it's possible that this largest overlap deviates a lot from 1 even when $1/L$ is very small. This indicates a gapless point and can be explained by the perturbation theory of a degenerate state.

\begin{figure}
    \centering
    \includegraphics[width = 8cm]{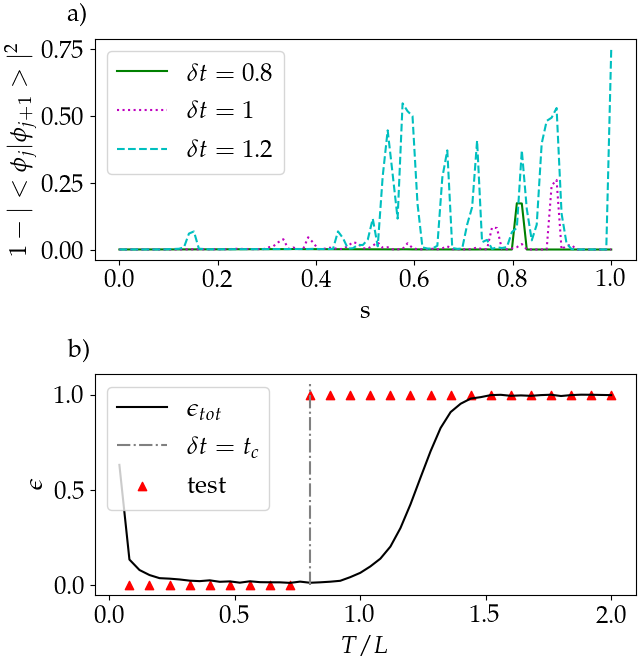}
    \caption{(a) Record of the fidelity distance between neighboring ``ground states" $1 - |\langle\phi_{j}|\phi_{j+1}\rangle|^{2}$ generated from the numerical test with $\delta t = 0.8,1,1.2$, which correspond to solid line, dotted line and dashed line, respectively. For most steps, the overlap $|\langle\phi_j|\phi_{j+1}\rangle|^2$ is very close to 1; at some points the overlap deviates significantly from 1, which implies a closing spectral gap. As $\delta t$ gets larger, the test gets more and more unstable. (b) The black line is the total error of DAS in Fig. \ref{fig:das_err}. For each Trotter step $\delta t$, we perform the test of near degeneracy on the effective Hamiltonian $\WH(s,\delta t)$ with the ground state of $H_{X}$ as initial state. If during the test, the largest overlap is larger than 0.99, then we regard the test as success and return 0. Otherwise the test fails, and we return 1. The test demonstrates whether the spectral gap of $\WH(s,\delta t)$ closes during $s : 0 \to 1$. As indicated by the figure, the spectral gap closes around $\delta t \approx 0.8$. This matches with the turning point $t_{c}$ of $\epsilon_{\text{tot}}$.}
    \label{fig:zeno_test}
\end{figure}

Suppose $H(s)$ has degenerate ground states $|\alpha\rangle,|\beta\rangle$ at certain point $s_{\ast}$, then in this degenerate subspace $\Pi$, $H(s_{\ast})$ is proportional to identity
\begin{equation*}
    H(s_{\ast})\bigg\vert_{\Pi} = E_{\ast}\begin{pmatrix}
    1 & 0\\
    0 & 1
    \end{pmatrix}.
\end{equation*}
Any vector in the degenerate subspace can be an eigenstate of $H(s_{\ast})$. Thus, the numerical result of the eigenstate of $H(s_{\ast})$ with energy $E_{\ast}$ can be any state of form $c_{\alpha}|\alpha\rangle + c_{\beta}|\beta\rangle$. In the next step $s = s_{\ast} + ds$:
\begin{equation*}
    H(s_{\ast} + ds)\bigg\vert_{\Pi} \approx E_{\ast}\begin{pmatrix}
    1 & 0\\
    0 & 1
    \end{pmatrix} + ds H'(s_{\ast})\bigg\vert_{\Pi}.
\end{equation*}
If $H'(s_{\ast})$ doesn't commute with $H(s_{\ast})$ in subspace $\Pi$, the eigenstate of $H(s)$ is arbitrary while the eigenstates of $H(s_{\ast} + ds)$ is fixed to be that of the perturbation term. As a result, in the test of near degeneracy, the degenerate ground state of $H(s_{\ast})$ won't find an eigenstate of $H(s_{\ast} + ds)$ with overlap close to 1 as the algorithm doesn't derive ground states based on the perturbation of the next step. A similar thing happens if $H(s_{\ast})$ is not degenerate while $H(s_{\ast} + ds)$ is, or if there exists degenerate point $s^{\ast}$ between neighboring samples $s_{1}<s^{\ast}<s_{2}$. We can easily witness this phenomenon [see Fig. \ref{fig:zeno_test}(a)], where all points that deviates significantly from 1 represent a degenerate subspace.

Although the spectrum of $\WH(s,\delta t)$ is calculated indirectly, the eigenbasis of $\WH(s,\delta t)$ is the same as that of $U_{\text{tro}}(s,\delta t)$, and we can thus use this information to detect whether there exists degenerate ground states during the evolution $ s \in [0,1]$. If we find near degeneracy, then we also know that the spectral gap of $\WH(s,\delta t)$ is closing. As indicated by numerical results [see Fig. \ref{fig:zeno_test}(b)], the test fails with the robustness of DAS. We witnessed the same phenomenon in other numerical models. These tests support our conjecture that the error increases sharply as the spectral gap of $\WH(s,\delta t)$ closes. 

\section{Conclusion}
\label{sec:conclude}

The robustness of DAS here has two meanings: a very broad class of adiabatic path $H(s)$ is robust against rough discretization, and the effective Hamiltonian $\WH(s,\delta t)$ is robust in the sense that $\mathcal{G}(T,\WH(s,\delta t)) \approx \mathcal{G}(T,H(s))$ for large $\delta t$. If both conditions are satisfied, then the circuit depth of DAS is linear in $T$. In this work, we focus on the first point and provide strict analysis about it; as to the second point, because of the lack of proper analytic methods, we provide only numerical evidence that supports our explanation. However, in practical implementation of DAS, the second point is the factor that sets restriction on $\delta t$. Here we propose several possible solutions to find the critical point $t_{c}$ where the spectral gap of $\WH(s,\delta t)$ vanishes. 

The first solution, of course, is to provide a strict analysis of $\WH(s,\delta t)$. Certainly, the spectral gap won't close if the perturbation in spectral norm $\|\WH(s,\delta t) - H(s)\|$ is smaller than the original spectral gap $\lambda$. For a 1D 2-local normalized Hamiltonian supported on $N$ sites, it's equivalent to have $\delta t = O(\lambda/N)$. However, the estimation is not enough to explain the robustness. This direction doesn't look promising, but some rigorous results of the Floquet operator \cite{kos2018many} might help. 

The other approach is to find $t_{c}$ numerically. Although the numerical test in Sec. \ref{sec:V} gives us a good estimation of of $t_{c}$, it's impractical to perform it on large systems, as an exact diagonalization procedure will be too inefficient. The question can be formulated as follows: given a quantum state $|\Psi\rangle$ and a unitary quantum circuit $C$, how can we output the eigenstate of $C$ with largest overlap with $|\Psi\rangle$? Some previous works \cite{lin2020near} might shed light on quantum solutions to this problem. As well, it's not surprising that as $\delta t$ gets larger, the spectral gap of $\WH(s,\delta t)$ will close. The confusing phenomenon in Fig. \ref{fig:zeno_test} is, after the critical point $\delta t > t_{c}$, the spectral gap will always close for some $s_{\ast} \in [0,1]$. The reason might be related to the delocalization property of Floquet operators \cite{sieberer2019digital,heyl2019quantum}.

There have been many papers working on the connections between QAOA and AQC \cite{brady2021optimal,an2019quantum,brady2021behavior,wurtz2021counterdiabaticity,hegade2021shortcuts,chandarana2021digitized}, and the robustness of DAS might be one of them. We argue that, given fixed $H_{i}$ and $H_{f}$, by properly choosing function $p(s)$, the adiabatic path $H(s) = [1-p(s)]H_{i} + p(s)H_{f}$ can be very robust. Thus, the restriction on the ``Trotter step" can be very loose. This might be one direction that exhibits the efficiency of QAOA through the framework of AQC. Tools in QAOA, like energetic cost and quantum speed limit, can be applied to study a digital adiabatic process as well.  

The expansion of adiabatic error itself has some mystery as well. Comparing to another adiabatic theorem \cite{jansen2007bounds}, we find that the distance between two arbitrary energy levels $|E_{l} - E_{m}|$ shouldn't appear in the expression of an adiabatic theorem, and there should be two extra terms of order $O(\|H'\|^{2}/T\lambda^{3})$ and $O(\|H''\|/T\lambda^{2})$. We conjecture that it's possible to find space for improvement in both methods.

\section{Acknowledgement}
We sincerely thank Elizabeth Crosson for insightful comments, and also thank Andrew Zhao for helpful discussions. This work was supported by the U.S. Department of Energy, Office of Science, National Quantum Information Science Research Centers, Quantum Systems Accelerator (QSA).

\bibliographystyle{unsrt}
\bibliography{main}

\appendix
\newpage
\widetext

\section{Discrete Riemann-Lebesgue lemma}
\label{apd:rl}
\begin{lemma}[Riemann-Lebesgue lemma]
If $f(x)$ is an $L^{1}$ integrable, differentiable function defined on $\mathbb{R}$, $a<b$ are two finite numbers, if $T \gg \max|f(x)|, \max|f^{\prime}(x)|$, then
\begin{equation}
\left|\int_{a}^{b}f(x)e^{iTx}dx\right| = O\left(\frac{1}{T}\right).
\end{equation}
\label{lemma:rl0}
\end{lemma}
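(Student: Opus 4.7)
The plan is to prove this classical statement by a single integration by parts. Since $f$ is differentiable on $[a,b]$ and $e^{iTx}$ admits the antiderivative $e^{iTx}/(iT)$, I would first write
\begin{equation*}
\int_a^b f(x) e^{iTx}\,dx = \frac{1}{iT}\bigl[f(b)e^{iTb} - f(a)e^{iTa}\bigr] - \frac{1}{iT}\int_a^b f'(x) e^{iTx}\,dx,
\end{equation*}
and then bound the two pieces separately: the boundary term in modulus by $2\max|f|/T$, and the remaining integral, via the trivial estimate $|e^{iTx}|=1$, by $(b-a)\max|f'|/T$. Under the hypothesis $T \gg \max|f|,\,\max|f'|$, both contributions are of order $\mathcal{O}(1/T)$, which is the desired conclusion.

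There is essentially no analytic obstacle here; the only subtlety worth flagging is that $L^{1}$ integrability on all of $\mathbb{R}$ is not actually used in the local estimate on $[a,b]$. Differentiability together with compactness of the interval already furnishes finite pointwise bounds on $f$ and $f'$, and it is those that enter the estimate. The reason to single out this elementary lemma, in my reading, is that it serves as the analytic template for the main discrete Riemann-Lebesgue lemma (Theorem 1) of the paper: there the continuous integration by parts will be replaced by an Abel-type summation by parts, and the factor $1/(iT)$ by $1/\omega(s)$ with $\omega(s) = (e^{-i\delta t\,\lambda(s)}-1)/(i\delta t)$, which collapses to $\lambda(s)/i$ as $\delta t\to 0$. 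I would therefore present this short proof first and then parallel the same two-step structure in the discrete setting, where the ``boundary'' contribution accounts for the endpoint term $\max_{s=0,1}|f(s)|/(T\lambda(s))$ in the conclusion of Theorem 1 and the ``remainder'' against $\eta/\omega$ accounts for the $\mathcal{A}(\eta,\lambda)/T^{2}$ piece, so that the real work will be in the discrete analogue rather than in the continuous lemma above.
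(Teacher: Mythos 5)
Your proof is correct and is essentially identical to the paper's: a single integration by parts, with the boundary term bounded by $2\max|f|/T$ and the remainder by $\tfrac{1}{T}\int_a^b|f'(x)|\,dx$. Your side remark that the $L^1$ hypothesis on $\mathbb{R}$ is not actually needed for the local estimate on $[a,b]$ is also accurate.
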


\begin{proof}
The proof of the Riemann-Lebesgue lemma is nothing more than integration by part:
\begin{equation}
\begin{aligned}
\int_{a}^{b}f(x)e^{iTx}dx &= \int_{a}^{b}f(x)d\left(\frac{e^{iTx}}{iT}\right)\\
&= \frac{f(x)e^{iTx}}{iT}\bigg\vert_{a}^{b} - \frac{1}{iT}\int_{a}^{b}f^{\prime}(x)e^{iTx}dx.
\end{aligned}
\label{equ:first_order}
\end{equation}
$f(x)$ is integrable, thus $f^{\prime}(x)$ is bounded. Therefore,
\begin{equation}
\left|\int_{a}^{b}f(x)e^{iTx}dx\right| \le \frac{1}{T}\left(|f(a)| + |f(b)| + \int_{a}^{b}|f^{\prime}(x)|dx\right).
\end{equation}

\end{proof}

Although the $O(T^{-1})$ has been proved, the error scaling can actually be improved if $f^{\prime}(x)$ is also integrable and differentiable. Applying the integration by part again,
\begin{gather}
\int_{a}^{b}f(x)e^{iTx}dx = \frac{f(x)e^{iTx}}{iT}\bigg\vert_{a}^{b} + \frac{1}{T^{2}}\left[f^{\prime}(x)e^{iTx}\big\vert_{a}^{b} - \int_{a}^{b}f^{\prime\prime}(x)e^{iTx}dx\right],\\
\left|\int_{a}^{b}f(x)e^{iTx}dx\right| \le \frac{1}{T}\left(|f(a)| + |f(b)|\right) + \frac{1}{T^{2}}\left(|f'(a)| + |f'(b)| + \int_{a}^{b}|f''(x)|dx\right),
\end{gather}
which is a better upper bound if $T \gg |f'(x)|, |f''(x)|$. 

Now we try to extend the results to the discrete version.
\begin{lemma}
Given a complex function $z(s) = x(s) + iy(s),\quad s\in [0,1]$, consider the following summation:
\begin{equation}
    \sum_{k=1}^{L}|z_{k} - z_{k-1}|,\quad z_{k}:= z(s_{k}),\quad s_{k} := (k+1)/L.
\end{equation}
It's upper bounded by
\begin{equation}
    \sum_{k=1}^{L}|z_{k} - z_{k-1}| \le \int_{0}^{1}|z'(s)|ds.
\end{equation}
\label{lemma:sum_complex}
\end{lemma}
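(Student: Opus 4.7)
The plan is to reduce the claim to the standard inequality $\left|\int g\right| \le \int |g|$ applied piecewise to $z'$ on the subintervals $[s_{k-1},s_k]$. This is essentially the statement that the polygonal approximation length is bounded by the arclength.

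Concretely, first I would invoke the fundamental theorem of calculus on each subinterval to write
\begin{equation*}
z_k - z_{k-1} = \int_{s_{k-1}}^{s_k} z'(s)\,ds,
\end{equation*}
which requires $z$ to be at least absolutely continuous (certainly satisfied in the applications in the paper where $z$ will be built from smooth quantities like $f/\omega$). Next I would apply the triangle inequality for complex integrals to each term:
\begin{equation*}
|z_k - z_{k-1}| \;=\; \left|\int_{s_{k-1}}^{s_k} z'(s)\,ds\right| \;\le\; \int_{s_{k-1}}^{s_k}|z'(s)|\,ds.
\end{equation*}
Finally, summing over $k=1,\ldots,L$, the right-hand side telescopes as a sum of integrals over disjoint adjacent subintervals covering $[0,1]$, yielding
\begin{equation*}
\sum_{k=1}^{L}|z_k - z_{k-1}| \;\le\; \sum_{k=1}^{L}\int_{s_{k-1}}^{s_k}|z'(s)|\,ds \;=\; \int_0^1 |z'(s)|\,ds,
\end{equation*}
which is the desired inequality.

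There is essentially no obstacle in the argument; the only point requiring a small amount of care is the convention for the sampling points $s_k$. As written, $s_k := (k+1)/L$ does not produce a partition with endpoints $0$ and $1$, so I would either assume the standard convention $s_k = k/L$ (so that $s_0 = 0$ and $s_L = 1$) or note explicitly that the same argument applies on whatever interval $[s_0,s_L]$ the samples span, with the bound $\int_{s_0}^{s_L} |z'(s)|\,ds \le \int_0^1 |z'(s)|\,ds$. No further analytic input, such as uniform continuity of $z'$ or smoothness beyond $C^1$, is needed, which is what makes this lemma a convenient building block for bounding the remainder $\mathcal{A}(\eta,\lambda)$ appearing in Theorem \ref{theore:discrete_rl}.
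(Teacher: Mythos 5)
Your proof is correct. The paper in fact states Lemma \ref{lemma:sum_complex} without any proof, and your argument --- writing each increment as $z_k - z_{k-1} = \int_{s_{k-1}}^{s_k} z'(s)\,ds$, applying the triangle inequality for complex-valued integrals, and summing over the disjoint subintervals --- is the standard one that fills that gap. You are also right to flag the indexing convention: as written, $s_k=(k+1)/L$ does not give a partition of $[0,1]$ (indeed $s_L>1$), so the intended convention must be $s_k=k/L$ or the bound must be read as $\int_{s_0}^{s_L}|z'|$, exactly as you note.
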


\begin{lemma}[First order discrete Riemann-Lebesgue lemma]
$f(x)$ is a $L^{1}$ integrable, second-differentiable complex function defined on $[0,1]$, $\lambda(x)$ is a positive real function. Denote $f_{k} := f(s_{k})$, then consider the following discrete summation:
\begin{equation}
J := \frac{1}{L}\sum_{k=1}^{L}f_{k}\exp[-iTg_{k}],\quad g_{k} := \frac{1}{L}\sum_{j<k}\lambda_{j}
\end{equation}
Define $\delta t := T/L$. If $\delta t\max \lambda(s) < 3.78$ and $T \gg |f/\lambda|$, we have
\begin{gather}
    |J| = O\left(\max_{s=0,1}\frac{|f(s)|}{T\lambda(s)}\right) + O\left(\frac{\mathcal{A}(f,\lambda)}{T}\right),\\
    \mathcal{A}(f,\lambda) := \int_{0}^{1}\left|(f(x)/\omega(x))'\right|dx, \quad \omega(x) := (e^{-i\delta t \lambda(x)}-1)/i\delta t.
\end{gather}
\label{lemma:rl2}
\end{lemma}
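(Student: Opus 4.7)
The plan is to mimic the integration-by-parts proof of Lemma~\ref{lemma:rl0}, with Abel summation playing the role of integration. The central observation is that the discrete oscillatory factor telescopes:
\begin{equation*}
e^{-iTg_{k+1}} - e^{-iTg_k} \;=\; (e^{-i\delta t\lambda_k}-1)\,e^{-iTg_k} \;=\; i\delta t\,\omega_k\,e^{-iTg_k}.
\end{equation*}
Dividing by $i\delta t\,\omega_k$ (the hypothesis $\omega_k\neq 0$ must be invoked here) and using $T = L\delta t$ rewrites
\begin{equation*}
J \;=\; \frac{1}{iT}\sum_{k=1}^{L}\frac{f_k}{\omega_k}\bigl(e^{-iTg_{k+1}} - e^{-iTg_k}\bigr),
\end{equation*}
which is the discrete analogue of the substitution $e^{iTx}dx = d(e^{iTx}/iT)$ used in Eq.~(\ref{equ:first_order}); the factor $1/\omega_k$ plays the role of the chain-rule correction on a non-flat phase.

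Next, I would apply summation by parts to shift the difference operator onto the amplitude $f_k/\omega_k$, producing a boundary term and a discrete total-variation sum:
\begin{equation*}
J \;=\; \frac{1}{iT}\left[\frac{f_L}{\omega_L}e^{-iTg_{L+1}} \;-\; \frac{f_1}{\omega_1}\right] \;-\; \frac{1}{iT}\sum_{k=1}^{L-1}\left(\frac{f_{k+1}}{\omega_{k+1}} - \frac{f_k}{\omega_k}\right)e^{-iTg_{k+1}},
\end{equation*}
where I have used that $g_1 = 0$ by the empty-sum convention. The triangle inequality bounds the boundary contribution by $\max_{s\in\{0,1\}}|f(s)/\omega(s)|/T$, while Lemma~\ref{lemma:sum_complex} applied to $z(s) = f(s)/\omega(s)$ bounds the bulk sum by $\mathcal{A}(f,\lambda)/T$. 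Together these give the claimed estimate modulo replacing $|\omega(s)|$ by $\lambda(s)$ in the boundary denominator.

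The remaining step is to convert $|\omega(s)|$ into $\lambda(s)$, and this is where the numerical threshold $3.78$ enters and where I expect the only real obstacle to lie. A direct computation gives $|\omega(s)| = 2|\sin(\delta t\lambda(s)/2)|/\delta t = \lambda(s)\cdot|\sin u|/u$ with $u := \delta t\lambda(s)/2$. Since $|\sin u|/u$ is strictly decreasing on $(0,\pi)$ and crosses $1/2$ at $u \approx 1.895$, the hypothesis $\delta t\max_s\lambda(s) < 3.78$ forces $|\omega(s)|\ge \lambda(s)/2$, upgrading the boundary bound to the advertised $\mathcal{O}(\max_{s=0,1}|f(s)|/(T\lambda(s)))$. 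This constant is not cosmetic: $1/\omega$ has genuine poles at $\delta t\lambda \in 2\pi\mathbb{Z}$, the aliasing points at which the discrete sum decouples from its continuous counterpart, as illustrated explicitly by the $f\equiv 1$, $\lambda\equiv 1$ example in Section~\ref{sec:rl} whose amplitude saturates at $1$ when $\delta t=2\pi$. Everything else in the argument is a mechanical translation of the classical integration-by-parts proof.
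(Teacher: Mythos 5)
Your proposal is correct and follows essentially the same route as the paper's proof: the telescoping identity $e^{-iTg_{k+1}}-e^{-iTg_k}=i\delta t\,\omega_k e^{-iTg_k}$, Abel summation to move the difference onto $f_k/\omega_k$, Lemma~\ref{lemma:sum_complex} for the bulk term, and the bound $|\omega(s)|=\lambda(s)\,\sin(u)/u$ with $u=\delta t\lambda(s)/2$ to control the boundary terms. In fact your statement $|\omega(s)|\ge\lambda(s)/2$ is the correct form of the inequality (the paper's written claim $c_k\in(1,2)$ should read $c_k\in(1/2,1)$, which does not affect the $\mathcal{O}$ conclusion).
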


\begin{proof}
Define
\begin{equation}
\omega_{k} := \frac{L}{iT}(e^{-iT(g_{k+1} - g_{k})}-1) = \frac{1}{i\delta t}(e^{-i\delta t \lambda_{k}} - 1).
\end{equation}
$\omega_{k}$ is the discrete version of $w(x) = (e^{-i\delta t \lambda(x)}-1)/i\delta t$, it satisfies
\begin{equation}
\frac{e^{-iT g_{k}}}{L} = \frac{e^{-iT g_{k+1}} - e^{-iT g_{k}}}{iT\omega_{k}}.
\end{equation}
Thus, exploit the discrete version of integration by part:
\begin{equation}
\begin{aligned}
J &= \frac{1}{iT}\sum_{k=1}^{L}f_{k}\frac{e^{-iT g_{k+1}} - e^{-iT g_{k}}}{\omega_{k}}\\
&= \frac{1}{iT}\left(\frac{f_{L}}{\omega_{L}}e^{-iTg_{L+1}} - \frac{f_{0}}{\omega_{0}}e^{-iT g_{1}} - \sum_{k=1}^{L}\left(\frac{f_{k}}{\omega_{k}} - \frac{f_{k-1}}{\omega_{k-1}}\right)e^{-iT g_{k}}\right).
\end{aligned}
\label{equ:start}
\end{equation}
To derive an upper bound for $|J|$, we start with the norm of $\omega_{k}$:
\begin{equation}
|\omega_{k}| = \frac{2}{\delta t}\sin\left(\lambda_{k}\frac{\delta t}{2}\right).
\end{equation}
The $sin$ function satisfies $x/2 < \sin(x) < x ,\quad x\in (0,1.89)$. Hence,
\begin{equation}
    |\omega_{k}| = c_{k}\lambda_{k},\quad c_{k} \in (1,2) \text{ for } \lambda_{k}\delta t < 3.78,
\end{equation}
which implies
\begin{equation}
\left|\frac{1}{iT}\left(\frac{f_{L}}{\omega_{L}}e^{-iTg_{L+1}} - \frac{f_{0}}{w_{0}}e^{-iTg_{1}}\right)\right| < \frac{1}{T}\left(\left|\frac{f_{L}}{w_{L}}\right| + \left|\frac{f_{0}}{w_{0}}\right|\right) = O\left(\max_{s=0,1}\frac{|f(x)|}{T\lambda(x)}\right).
\end{equation}
Then we focus on the second summation:
\begin{equation}
    \left|\frac{-1}{iT}\sum_{k=1}^{L}\left(\frac{f_{k}}{\omega_{k}} - \frac{f_{k-1}}{\omega_{k-1}}\right)e^{iT\Lambda_{k}}\right| \le \frac{1}{T}\sum_{k=1}^{L}\left|\frac{f_{k}}{\omega_{k}} - \frac{f_{k-1}}{\omega_{k-1}}\right|.
\end{equation}
Using Lemma \ref{lemma:sum_complex}, we obtain
\begin{equation}
    \text{RHS} \le \frac{1}{T}\int_{0}^{1}\left|(f(s)/\omega(s))'\right|ds.
\end{equation}
\end{proof}

On the other hand, just like the Riemann-Lebesgue lemma, we can apply the ``integration by part" step again. The next analysis is the proof of Theorem \ref{theore:discrete_rl}:

\begin{proof}
Start with Eq : (\ref{equ:start}), we introduce
\begin{equation}
\eta_{k} := L\left(\frac{f_{k}}{\omega_{k}} - \frac{f_{k-1}}{\omega_{k-1}}\right).
\end{equation}
It's the discrete version of
\begin{equation}
    \eta(x) = L\left(\frac{f(s)}{\omega(s)} - \frac{f(s - 1/L)}{\omega(s-1/L)}\right) \approx \left(\frac{f(s)}{\omega(s)}\right)'.
\end{equation}
Then the summation becomes
\begin{equation}
    J := \frac{1}{iT}\left(\frac{f_{L}}{\omega_{L}}e^{-iTg_{L+1}} - \frac{f_{0}}{\omega_{0}}e^{-iT g_{1}} - \frac{1}{L}\sum_{k=1}^{L}\eta_{k}e^{-iT g_{k}}\right).
\end{equation}
We use Lemma \ref{lemma:rl2} on the second part:
\begin{gather}
    |J| = O\left(\max_{s=0,1}\frac{|f(s)|}{T\lambda(s)}\right) + O\left(\max_{s=0,1}\frac{|\eta(s)|}{T^{2}\lambda(s)}\right) + O\left(\frac{\mathcal{A}(\eta,\lambda)}{T^{2}}\right).
\end{gather}
\end{proof}

\section{Proof of Corollary \ref{corollary:robust}}
\label{apd:proof}
\begin{proof}
The proof largely follows Lemma 4 and Corollary 1 of \cite{cheung2011improved}. The condition  (\ref{equ:condition1}) and (\ref{equ:condition2}) is used to guarantee that $\Gamma^{(1)}$ is enough to approximate the adiabatic error. In another word, $\epsilon_{\text{adb}} = O(\sqrt{\sum_{l>0}|\epsilon_{l}|^{2}})$. Then we apply Theorem \ref{theore:discrete_rl} to each of $\epsilon_{l}$:
\begin{equation}
    \epsilon_{l} = \frac{\theta_{l}(1)}{iT\omega_{l}(1)}e^{-i\delta t\sum_{j=0}^{L}\lambda_{l}(j)} - \frac{\theta_{l}(0)}{iT\omega_{l}(0)}e^{-i\delta t\lambda_{l}(0)} + O\left(\frac{1}{T^{2}}\right),
\end{equation}
where $\omega_{l}(x) := (e^{-i\delta t \lambda_{l}(x)}-1)/i\delta t$. Lemma \ref{lemma:sum_complex} tells us that the discrete summation of $|z(s_{k})-z(s_{k-1})|$ is trivially bounded by its continuous limit. Thus, the terms generated from ``integration by part" can be directly upper bounded by their continuous analog in \cite{cheung2011improved}, except $\{\lambda_{l}(s)\}$ are replaced with $\{\omega_{l}(s)\}$.
    In the region where $|\lambda_{l}(s)| \le |\omega_{l}(s)| \le 2|\lambda_{l}(s)|$, we have
\begin{equation}
    |\epsilon_{l}|^{2} = O\left(\left|\frac{\theta_{l}(1)}{T\lambda_{l}(1)}\right|^{2}\right) + O\left(\left|\frac{\theta_{l}(0)}{T\lambda_{l}(0)}\right|^{2}\right).
\end{equation}
Using $\langle l'|0\rangle + \langle l|0'\rangle = 0$, we further obtain
\begin{equation}
    \sum_{l\neq 0}|\theta_{l}|^{2} = \sum_{l\neq 0}|\langle 0'|l\rangle|^{2} \le \langle 0'|0'\rangle \le \frac{\|H'\|^{2}}{\lambda^{2}_{1}}.
\end{equation}
Finally, $\epsilon_{\text{adb}}$ is upper bounded by
\begin{equation}
    \epsilon_{\text{adb}} = O\left(\frac{\|H'(1)\|}{T\lambda^{2}_{1}(1)}\right) + O\left(\frac{\|H'(0)\|}{T\lambda^{2}_{1}(0)}\right).
\end{equation}
\end{proof}

\section{Adiabatic Theorem with Riemann-Lebesgue Lemma Structure}

In this appendix we write one proof of an adiabatic theorem with notations of projectors. The method is the same as that of \cite{dziemba2016adiabatic}.  $P$ is the projector into ground state of non degenerate Hamiltonian $\bar{H}$ with energy $E_{0}$. Defining $H$ as the shifted Hamiltonian such that $H = \bar{H} - E_{0}P,\quad HP = 0$. $G = \sum_{k}P_{k}/(E_{k} - E_{0})$ is the pseudo-inverse of $H$. It satisfies $GH = HG = I-P$. In previous work \cite{dziemba2016adiabatic}, it has been proved that
\begin{equation}
    P' = -GH'P - PH'G,\quad
    G' = PH'G^{2} -GH'G + G^{2}H'P.
\end{equation}
Consider the following operator
\begin{equation}
    A(s) : = \exp_{\mathcal{T}}\left[-iT\int_{s}^{1}H(s)ds\right],
\end{equation}
which satisfies
\begin{equation}
    A' = iTAH,\quad A'G^{2} = iTAG.
\label{equ:derivative}
\end{equation}
These are all the elements we need for proving adiabatic theorem. First, notice that we are to compare the operator norm between
\begin{equation}
    \epsilon = \|\hat{\epsilon}\| := \|A(1)P(1)A^{\dagger}(1) - A(0)P(0)A^{\dagger}(0)\|.
\end{equation}
It's natural to write $\hat{\epsilon}$ in the form of an integral:
\begin{equation}
\begin{aligned}
    \hat{\epsilon} &= \int_{0}^{1}(APA^{\dagger})'ds\\
    &=\int_{0}^{1}iTAHPA^{\dagger} + AP'A^{\dagger} + AP(-iTHA^{\dagger})ds\\
    &=\int_{0}^{1}AP'A^{\dagger}ds\\
    &=\int_{0}^{1}-AGH'PA^{\dagger} - APH'GA^{\dagger}ds.
\end{aligned}
\end{equation}
Two parts are Hermitian conjugate to each other. Use integration by part we can prove every integral like $\int_{0}^{1}AGXPA^{\dagger} + H.c. ds$ has a factor of $1/T$. From Eq: (\ref{equ:derivative}) we obtain
\begin{equation}
    \begin{aligned}
    \int_{0}^{1}-AGH'PA^{\dagger}ds &= \frac{i}{T}\int_{0}^{1}A'G^{2}H'PA^{\dagger}ds\\
    &=\frac{i}{T}\left(AG^{2}H'PA^{\dagger}\bigg\vert_{0}^{1} - \int_{0}^{1}A(G^{2}H'PA^{\dagger})'ds\right)\\
    &= \frac{i}{T}\left(AG^{2}H'PA^{\dagger}\bigg\vert_{0}^{1} - \int_{0}^{1}AGXPA^{\dagger}ds + \int_{0}^{1}AG^{2}H'PH'GA^{\dagger}ds\right)
    \end{aligned}
\end{equation}
with
\begin{equation}
    X = GH'' - 2GH'GH' - H'GH'.
\end{equation}
This is the structure of the Riemann-Lebesgue lemma. We can apply the integration by part to $\int_{0}^{1}AGXPA^{\dagger}ds$ again, which results in another factor of $T^{-1}$. When $T \gg \|X\|, \|X'\|$, it no longer appears in the leading term, thus
\begin{equation}
\begin{aligned}
    \epsilon &= \left\|\frac{i}{T}\left(AG^{2}H'PA^{\dagger}\bigg\vert_{0}^{1} + \int_{0}^{1}AG^{2}H'PH'GA^{\dagger}ds\right) + h.c.\right\| + O\left(\frac{1}{T^{2}}\right)\\
    &= O\left(\frac{\|H'\|}{T\lambda^{2}}\right) + O\left(\frac{\|H'\|^{2}}{T\lambda^{3}}\right).
\end{aligned}
\end{equation}
Of course, there are counterexamples where $\|H''\| = O(T)$ which make the part including $X$ not negligible.

\begin{corollary}
There's no $\lambda^{-3}$ term in two-level systems.
\begin{proof}
The $\lambda^{-3}$ term originates from
\begin{equation}
    \frac{i}{T}\int_{0}^{1}AG[G,H'PH']GA^{\dagger}ds.
\end{equation}
In the eigenbasis of $H(s)$, the operators in the commutator have representation
\begin{gather}
    H'|\psi\rangle = \begin{pmatrix}
    0\\
    h
    \end{pmatrix},\quad H'PH' = \begin{pmatrix}
    0 & 0\\
    0 & |h|^{2}
    \end{pmatrix},\quad
    G = \begin{pmatrix}
    0 & 0 \\
    0 & \frac{1}{\lambda}
    \end{pmatrix}.
\end{gather}
They commute with each other.
\end{proof}
\end{corollary}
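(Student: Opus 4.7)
The plan is to track the $\lambda^{-3}$ contribution through the adiabatic error expansion of the preceding section and then exploit a commutator structure that trivializes in two-level systems.

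First, I would locate the origin of the $\lambda^{-3}$ piece. In the integration-by-parts chain that brought $\int_{0}^{1} -AGH'PA^{\dagger}\,ds$ to a bound with an explicit $1/T$, the only integrand of formal size $\|H'\|^{2}/\lambda^{3}$ is $AG^{2}H'PH'GA^{\dagger}$. Every other residual (in particular the one produced by $X = GH'' - 2GH'GH' - H'GH'$) absorbs a further factor of $1/T$ after a second integration by parts and so contributes only at order $1/T^{2}$, which is below the accuracy we are tracking.

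Next, I would combine this integrand with the analogous term coming from the $-APH'GA^{\dagger}$ half of $\hat{\epsilon}$, whose contribution is the Hermitian conjugate $AGH'PH'G^{2}A^{\dagger}$. Using the algebraic identity
\[
G^{2}H'PH'G - GH'PH'G^{2} \;=\; G\,[\,G,\,H'PH'\,]\,G,
\]
the full leading $\lambda^{-3}$ piece of $\hat{\epsilon}$ reorganizes into $\frac{i}{T}\int_{0}^{1} AG[G,H'PH']GA^{\dagger}\,ds$, which is precisely the commutator form one needs.

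Finally, I would show this commutator vanishes in a two-level system. Working in the instantaneous eigenbasis $\{|\psi_{0}(s)\rangle, |\psi_{1}(s)\rangle\}$ of the shifted Hamiltonian (where $HP = 0$), the excited projector is $I - P = |\psi_{1}\rangle\langle\psi_{1}|$, so $G = (I-P)/\lambda$ is a scalar multiple of $I-P$. Differentiating $\langle\psi_{0}|H|\psi_{0}\rangle = 0$ forces $\langle\psi_{0}|H'|\psi_{0}\rangle = 0$, so $H'|\psi_{0}\rangle = h|\psi_{1}\rangle$ with $h := \langle\psi_{1}|H'|\psi_{0}\rangle$, and therefore $H'PH' = |h|^{2}(I-P)$. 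Both operators are proportional to the same rank-one projector, so they commute and the integrand is identically zero.

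The main obstacle I anticipate is the bookkeeping in the middle step: confirming that no stray $\lambda^{-3}$ contribution hides in the boundary terms or in the $X$-integral of the general adiabatic expansion, and that the combination with the Hermitian conjugate really produces the antisymmetric commutator $G[G,H'PH']G$ rather than a symmetric anticommutator with leftover pieces. Once the reorganization is clean, the two-level verification reduces to the one-line matrix computation above.
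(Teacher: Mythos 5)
Your proof is correct and follows essentially the same route as the paper: the leading $\lambda^{-3}$ contribution is the commutator term $\frac{i}{T}\int_{0}^{1}AG[G,H'PH']GA^{\dagger}\,ds$, and in a two-level system both $G$ and $H'PH'$ are proportional to the excited-state projector $I-P$, so the commutator vanishes. Your explicit derivation of the commutator form from pairing $AG^{2}H'PH'GA^{\dagger}$ with its Hermitian conjugate, and the check that $\langle\psi_{0}|H'|\psi_{0}\rangle=0$, merely fill in steps the paper leaves implicit.
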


\end{document}